\newtheorem{theorem}{Theorem}
\newtheorem{lemma}{Lemma}
\newtheorem{definition}{Definition}
\newtheorem{corollary}{Corollary}
\newcommand{\mv}[1]{\textcolor{blue}{\ifcomments MV: #1 \fi}}
\newif\ifcomments
\newcommand{\fl}[1]{}
\newcommand{\cut}[1]{}
\newcommand{\protocol}{\textsc{Orion}\xspace}
\definecolor{orange}{RGB}{225,128,0}
\definecolor{brown}{RGB}{225,128,128}
\definecolor{rgviolet}{RGB}{119,51,255}
\definecolor{redviolet}{RGB}{255,0,85}
\definecolor{jdbrown}{RGB}{77,38,0}
\newcommand{\mynote}[3]{
               \fbox{\bfseries\sffamily\scriptsize#1}
                    {\small$\blacktriangleright$\textsf{\emph{\color{#3}{#2}}}$\blacktriangleleft$}}}
\newcommand{\mynote}[3]{}}
\definecolor{orange}{RGB}{225,128,0}
\definecolor{brown}{RGB}{225,128,128}
\definecolor{purple}{rgb}{0.54, 0.17, 0.89}
\definecolor{rggold}{RGB}{218,165,32}
\definecolor{rgviolet}{RGB}{119,51,255}
\newcommand{\stb}[1]{\mynote{Story:}{#1}{red}} 
\newcommand{\textline}{\begin{center}\color{red}\hrulefill\end{center}}
\begin{document}

\title{
  Tolerating Disasters with Hierarchical Consensus
  \thanks{This work is supported through FNR/FCT grant C18/IS/12694392 (ThreatAdapt). For the purpose of open access, and in fulfilment of the obligations arising from the grant agreement, the author has applied a Creative Commons Attribution 4.0 International (CC BY 4.0) license to any Author Accepted Manuscript version arising from this submission.}}


\author{
  \IEEEauthorblockN{Wassim Yahyaoui\IEEEauthorrefmark{1}, Joachim Bruneau-Queyreix\IEEEauthorrefmark{2}, Marcus Völp\IEEEauthorrefmark{1}, Jérémie Decouchant\IEEEauthorrefmark{3}}
  \IEEEauthorblockA{
    \IEEEauthorrefmark{1}SnT - University of Luxembourg,
    \IEEEauthorrefmark{2}CNRS-LaBRI - Univ. Bordeaux - Bordeaux INP,
    \IEEEauthorrefmark{3}Delft University of Technology \\
    wassim.yahyaoui@uni.lu, joachim.bruneau-queyreix@u-bordeaux.fr, marcus.voelp@uni.lu, j.decouchant@tudelft.nl}
}

\maketitle

\begin{abstract}


  Geo-replication provides disaster recovery after catastrophic accidental failures or attacks, such as fires, blackouts or denial-of-service attacks to a data center or region.
  Naturally distributed data structures, such as Blockchains, when well designed, are immune against such disruptions, but they also benefit from leveraging locality.
  In this work, we consolidate the performance of geo-replicated consensus by leveraging novel insights about hierarchical consensus and a construction methodology that allows creating novel protocols from existing building blocks.
  In particular we show that cluster confirmation, paired with subgroup rotation, allows protocols to safely operate through situations where all members of the global consensus group are Byzantine.
  We demonstrate our compositional construction by combining the recent HotStuff and Damysus protocols into a hierarchical geo-replicated blockchain with global durability guarantees.
  We present a compositionality proof and demonstrate the correctness of our protocol, including its ability to tolerate cluster crashes. Our protocol --- \protocol\footnote{In the Greek mythology, Orion is a blind Giant that carried his servant on his shoulders to see for him. Our protocol uses cluster-confirmation to lift Damysus to the global level and use it for global consensus.} --- achieves a 20\%  higher throughput than GeoBFT, the latest hierarchical Byzantine Fault-Tolerant (BFT) protocol.

\end{abstract}

\begin{IEEEkeywords}
  Blockchain consensus, Byzantine fault and intrusion tolerance, clustered protocol
\end{IEEEkeywords}

\section{Introduction}
\label{sec:introduction}

Geo-replication provides disaster recovery by storing important information geographically distributed to tolerate entire sites crashing or becoming partitioned from the network. In addition, cyberattacks and similar accidental or intentionally malicious incidents may compromise individual replicas at a site, causing replicas to act irrationally or possibly even maliciously.
To compensate, sites replicate nodes and run Byzantine fault-tolerant consensus protocols to reach agreement about the transactions they commit.
In a geo-replicated setting, such protocols are hierarchical with the low-layer groups --- called \emph{clusters} --- aiming for consensus among the replicas of the same data center or region and a high-layer replica group --- the \emph{global group}
--- striving for agreement among data centers to ensure consistency across locations.


Several hierarchical consensus protocols have been proposed~\cite{berger2023sok,amir2008steward,thai2019hierarchical,gupta2020resilientdb,jiang2023scalable,zhang2018research,feng2018scalable,qushtom2022high,li2021optimized}. However, these proposals grapple with high-cost view-change protocols, often resulting in suboptimal performance and hindering practical application, albeit with limited decentralization or consistency.
For example, Steward~\cite{amir2008steward} ensures global ordering and durability by considering transactions of a single cluster at a time, GeoBFT~\cite{gupta2020resilientdb} ensures only durability, but not global ordering, which prevents clients from interacting with information that is not maintained by their local cluster, and HiBFT~\cite{thai2019hierarchical} only orders globally, but not locally, which prevents local ahead-of-commit conflict resolution for the majority of transactions that
only affect exclusively locally modified objects.

In this paper we present \protocol, a hierarchical consensus protocol that, in contrast to previous works, harnesses the full potential of cluster parallelism and that
proactively orders requests locally to expedite conflict resolution and global ordering, offering clients the flexibility to interact with all objects as necessary.
It accomplishes this while effectively circumventing the pitfalls of expensive view-change protocols that limit its predecessors.

\protocol distinguishes itself from previous protocols designed specifically for hierarchical consensus. This work leverages a novel insight in the construction of BFT protocols that was originally introduced in Steward: the construction of hierarchical protocols from non-hierarchical building blocks.
More precisely, we found that cluster consensus can not only constrain Byzantine replicas to behave like crashed ones, but in fact cluster consensus can replace trusted-trustworthy components as they are found in hybrid BFT protocols, such as MinBFT~\cite{veronese2011efficient} or Damysus~\cite{decouchant2022damysus}, and used there to reduce the required number of replicas and protocol steps\footnote{See also Gupta et al.~\cite{10.1145/3552326.3587455} and Bessani et al.~\cite{bessani2023vivisecting} for a recent discussion about hybrid BFT protocols.}.
Cluster confirmation --- i.e., the confirmation of all actions of global-group replicas (or, in our case, of replicas that originate from and represent the local cluster) through a majority of other local replicas --- thereby prevents Byzantine representatives from jeopardizing the consistency of the blockchain, while a cluster-global rotation scheme ensures liveness, even if temporarily all representatives in the global group are Byzantine.
%

\protocol exemplifies our generic construction method by leveraging as building blocks a consistent broadcast protocol, HotStuff~\cite{yin2019hotstuff} for early request-preordering and local conflict resolution, and Damysus~\cite{decouchant2022damysus} as hybrid protocol for reaching consensus in the global group.
We achieve the latter by implementing Damysus' trusted-component services --- checker and accumulator --- through cluster confirmation.
\protocol achieves 20\% higher throughput than GeoBFT at a slight increase in latency, measured in the time until the client receives confirmation that its requests are durably stored and in a manner that is robust to whole cluster crashes.

Specifically, \protocol features the following key points.
\begin{itemize}
  \item \protocol leverages a local pre-ordering for conflict resolution locally at a cluster without global interaction

  \item It is the first hierarchical protocol based on HotStuff, which avoids complex and costly view changes

  \item \protocol combines a consistent broadcast protocol for transaction dissemination before global ordering, and an inter-cluster SMR protocol that utilizes the Damysus hybrid BFT-SMR protocol to reduce the required number of clusters and global communication phases.

  \item A key achievement of \protocol is its novel compositionality, effectively integrating various building blocks into a cohesive framework. This unique assembly not only achieves scalability and high throughput but also significant enhances the efficiency and robustness of Byzantine fault-tolerant consensus protocols.
\end{itemize}


After discussing related work (in Sec.~\ref{sec:related_work}) and our fault and system models (in Sec.~\ref{sec:threat_model}),
we introduce (in Sec.~\ref{sec:approach}) \protocol, our approach to geo-replicated Byzantine fault-tolerant state machine replication (BFT-SMR) and Blockchains.
Sec.~\ref{sec:details} describes the \protocol steps in detail.
Our construction from different base protocols is made possible by a novel compositionality result, which we prove in Sec.~\ref{sec:proofs}.
Sec.~\ref{sec:evaluation} evaluates our protocol and compares it to GeoBFT and non-hierarchical baselines. Sec.~\ref{sec:conclusion} concludes this paper.

\section{Related Work}
\label{sec:related_work}

\textbf{Classical and streamlined BFT consensus.}
Classical BFT consensus algorithms assume partially synchronous networks, tolerate up to $f$ Byzantine replicas and require at least $3f+1$ replicas to guarantee both safety and liveness. These algorithms include the seminal PBFT~\cite{castro2002practical}, BFT-SMaRt~\cite{bessani2014state}, Zyzzyva~\cite{kotla2007zyzzyva}, are leader-based and rely on a view-change scheme.
The quadratic message complexity of these protocols typically result in low performance in geo-distributed settings.
Improving over PBFT's performance, HotStuff~\cite{yin2019hotstuff} avoids all-to-all communication patterns, by
integrating view-change procedure into the steady case, leading to linear message complexity and increased throughput.
In return Hotstuff requires one additional communication phase.
In this work, we use Hotstuff pre-order requests within a cluster and extend its rotation scheme to avoid view changes, locally and globally.

\textbf{Trusted components.}
Another line of work aimed at leveraging trusted components inside BFT consensus algorithms to reduce the number of replicas they require, and possibly the number of communication phases they employ.
Possible trusted components have included trusted logs~\cite{haeberlen2007peerreview}, attested append-only memory
(A2M)~\cite{Chun+Maniatis+Shenker+Kubiatowicz:sosp:2007}, and multiple trusted counters~\cite{Levin+Douceur+Lorch+Moscibroda:usenix:2009}.
Following this line of work, MinBFT~\cite{veronese2011efficient} builds on PBFT and uses a trusted monotonic counter to make omissions and equivocation detectable and reduce the number of replicas to $2f+1$ and the communication phases to two.
Damysus~\cite{decouchant2022damysus} identifies two trusted services for streamlined BFT protocols. To ensure that replicas cannot fabricate information about the most recently prepared blocks, these services record additional block-related information. Damysus removes one communication phase (two network latencies) from HotStuff and also uses $2f+1$ replicas. Although we do not use trusted components in this work, we leverage Damysus at the global communication level. More precisely, under our system model the quorum certificate from a cluster in our protocol is assimilated as a trusted component's signature in Damysus.

\textbf{Hierarchical BFT.}
Organizing replicas in a hierarchical manner is another way to improve the performance of consensus in wide-area networks~\cite{berger2023sok}.
Hierarchical protocols have mostly relied on PBFT, while we leverage recent streamlined protocols, and separate block dissemination from ordering to obtain a higher throughput.
In Steward~\cite{amir2008steward}, replicas rely on PBFT inside clusters, and clusters use a two-phase commit protocol to reach global consensus. Steward assumes that up to one-third of a cluster's replicas can be Byzantine, while only a minority of the clusters can crash.
HiBFT~\cite{thai2019hierarchical} uses PBFT at both the local and the global consensus levels.
HiBFT processes one block per global consensus round, while we allow the asynchronous dissemination of blocks and the creation of superblocks by the global consensus layer. In addition, HiBFT uses  PBFT's heavy view-change procedure, while we build on HotStuff and Damysus, which integrate the view-change procedure in the normal case.
GeoBFT~\cite{gupta2020resilientdb}
relies on PBFT within each cluster, and uses a representative per cluster that shares a certified client request with other clusters. Eventually all client requests for a given view are received by a representative and deterministically ordered. GeoBFT employs a complex and expensive remote view-change protocol.

Other hierarchical protocols based on PBFT include~\cite{jiang2023scalable,zhang2018research,feng2018scalable,qushtom2022high,li2021optimized, xu2021concurrent}. Some protocols build on different protocols in the two-layer hierarchy~\cite{wang2020beh, wen2020dp}, while a few adopt a variable number of layers~\cite{li2020scalable, javad2021saguaro}.
Regardless of the unique mechanisms they employ, all these protocols grapple with the challenges associated with costly view-change operations and their throughput is limited to a single cluster proposal per round.

\textbf{Making BFT scale.}
We leverage several design paradigms that have been described in previous works. First, we separate block dissemination from their ordering, because blocks generated at the global level only contain the hashes of blocks generated by clusters. A similar mechanism was used in Narwhal/Tusk~\cite{danezis2022narwhal} and in Bullshark~\cite{spiegelman2022bullshark}, which are DAG-based~\cite{baird2016swirlds, gkagol2019aleph, sompolinsky2015secure}.
The idea of assembling superblocks was used in HoneyBadgerBFT~\cite{miller2016honey} and RedBelly~\cite{crain2021red}.
Other methods that could be used to improve the performance of BFT protocols include the use of multiple leaders, which Mir-BFT~\cite{stathakopoulou2019mir} and Alder ~\cite{korkmaz2022alder} pioneered. These protocols are built around PBFT or Algorand~\cite{gilad2017algorand} and are therefore not directly applicable to HotStuff, which is the protocol we use. We do not employ sharding techniques~\cite{luu2016secure,zamani2018rapidchain} in \protocol as their performance depends on the actual workloads being executed but they are fully compatible with its design.

\begin{table*}[h]
  \caption{Key metrics of \protocol and related works}
  \label{tab:rw}
  \vspace{-0.35cm} - $n$: \# replicas per cluster - $f$: \# faulty replicas in each cluster - $c$: \# clusters - $F$: \# faulty clusters -
  \centering
  \resizebox{\textwidth}{!}{%
    \begin{tabular}{|c|c|c|c|c|c|c|}
      \hline
                                        &             & Total            & Threat model   & Msg complexity & Msg complexity & Communication steps \\
                                        & \# Clusters & \# Replicas      & cluster/global & of normal case & of view change & (+view change)      \\
      \hline
      PBFT\cite{castro2002practical}    & 1           & 3$f$+1           & BFT            & O(n²)          & O(n²)          & 3(+2)               \\
      \hline
      HotStuff\cite{yin2019hotstuff}    & 1           & 3$f$+1           & BFT            & O(n)           & -              & 8                   \\
      \hline
      \hline
      Steward\cite{amir2008steward}     & 2$F$+1      & (3$f$+1)(2$F$+1) & BFT/CFT        & O(cn²+c²)      & O(c²n²)        & 3(+4)               \\
      \hline
      HiBFT\cite{thai2019hierarchical}  & $3F+1$      & n*c              & -/BFT          & O(cn+c²)       & O(c²n²)        & 3(+3)               \\
      \hline
      GeoBFT\cite{gupta2020resilientdb} & c           & (3$f$+1)c        & BFT/-          & O(cn²+c²)      & O(c²n²)        & 1(+3)               \\
      \hline
      \hline
      \protocol                         & 2$F$+1      & (3$f$+1)(2$F$+1) & BFT/*CFT       & O(cn+c²+c)     & -              & 6                   \\
      (this work)                       &             &                  &                &                &                &                     \\
      \hline
    \end{tabular}
  }
\end{table*}

Table~\ref{tab:rw} provides an overview of the BFT consensus protocols that are most closely related to this work, and compares them with \protocol, our protocol. Note that \protocol's threat model is close to Steward's but contains some noticeable adjustments to more effectively manage potential faulty behaviors at the global level. In both Steward and \protocol, clusters may crash, however in \protocol, up to $F$ out of the $N=2F+1$ representatives that form the global group can be Byzantine.

\section{System and Fault Model}
\label{sec:threat_model}

\textbf{Clusters and Replicas.}
We consider a static system that implements a consensus service, used for example by a permissioned blockchain, in geo-replicated settings.
The system consists of $N$ clusters $C_1, \ldots C_N$.
Each cluster $C_i$ contains $n_i = 3f_i + 1$ replicas of which up to $f_i$ replicas can be arbitrarily faulty (i.e., Byzantine).
We write $r_{i,k}$ for the $k^{th}$ replica in cluster $C_i$ and omit indices where they are clear from the context.
Clusters may crash, e.g., in case of fire in a cluster, regional blackouts or network partition.
In that case, we say that a cluster has \emph{crashed}.
Our goal is to tolerate up to $f_i$ Byzantine replicas in each cluster and up to $F$ out of the $N=2F+1$ crashed clusters.



Within each cluster, one replica is a local leader (denoted $r^i_L$) and is responsible for pre-ordering requests locally.
Leaders are rotated at each view change, which is a global event.
Replicas are able to identify who leads a cluster based on the global view $v$ (e.g., by choosing as leader $r^i_L$ of cluster $C_i$ the $L^{th}$ replica where $L = v\ \mathit{mod}\ n_i$).


\textbf{Clients.}
Clients interact primarily with the cluster $C_i$ that is closest to them, but will be able to receive service from other clusters should $C_i$ crash.
We call $C_i$ the \emph{local cluster} of such a client.
Any number of clients may be faulty.


\textbf{Global Group.}
In addition to the cluster leaders, we distinguish a second replica in each cluster --- the \emph{representative} $R^i$. Representatives form the global group.
Leader and representative might be the same replica, but, for load-balancing reasons, choosing different replicas suggests itself.
We achieve this by rotating the representative on global view changes and by adjusting local rotation
to skip over selected representatives when selecting the next local leader.

Notice that because representatives are cluster replicas, some might belong in crashed clusters. All members of the global group can be  Byzantine. We tolerate these obstacles by rotating out of these configurations and eventually reaching one where at most $F$ representatives originate from crashed clusters or are Byzantine.

\textbf{Communication and Synchrony.}
We assume communication among replicas of a given cluster to exhibit a much lower latency and higher bandwidth on average than inter-cluster communications.
We build upon the partially synchronous model~\cite{dwork1988consensus}.
That is, we assume that after an unknown global stabilization time $GST$ there exists a known bound $\Delta$ such that communication latency remains below $\Delta$.


\textbf{Cryptographic Primitives.}
We assume the cryptographic functions we use to be secure, e.g., signatures cannot be forged and hashing schemes are collision-resistant.
Moreover, we assume replicas to be deployed securely (e.g., using authenticated boot) and healthy ones to not leak credentials.

\section{
  Geo-Replicated BFT-SMR and Blockchains}
\label{sec:approach}

We first give an overview of our geo-replicated permissioned blockchain, and of \protocol, the hierarchical Byzantine-fault-tolerant consensus protocol it uses.


\begin{figure}[b]
  \centering
  \includegraphics[width=0.7\linewidth]{ 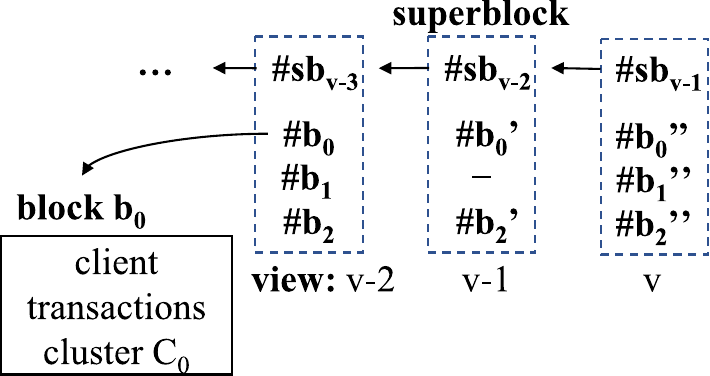}
  \caption{Structure of our blockchain.}
  \label{fig:blockchain}
\end{figure}

In the absence of faults, clients communicate with their preferred cluster to issue transactions that should be made durable by placing them in the geo-replicated blockchain before executing them.
Clusters assemble client transactions into blocks, disseminate them for other clusters to  store them durably, order them into the local chain, and request the global group to include them into the blockchain.
The global group in turn assembles received blocks into a superblock and appends it to the blockchain.
Once appended, clusters resolve global conflicts in a deterministic manner and, if required, execute the contained transactions to update their state. Fig.~\ref{fig:blockchain} shows the structure of our blockchain.

To minimize the amount of data that needs to be communicated across clusters, we limit the communication of transactions to a block dissemination phase and reference blocks using their secure hash during the remainder of the protocol.
Consequently, superblocks store hashes to blocks plus one hash to connect to the last superblock already in the chain.
A superblock contains a maximum number $k$ of hashes of blocks proposed by the clusters and it is possible that it contains more than one block from a given cluster. The conditions for accepting a superblock are (1) that blocks from a cluster are queued in a local-order preserving manner and (2) stored durably in at least $F+1$ clusters. Healthy global leaders will further try to balance how many blocks are stored from each cluster, a property which faulty global leaders may of course try to jeopardize.





\begin{figure}[t]
  \includegraphics[width=\linewidth]{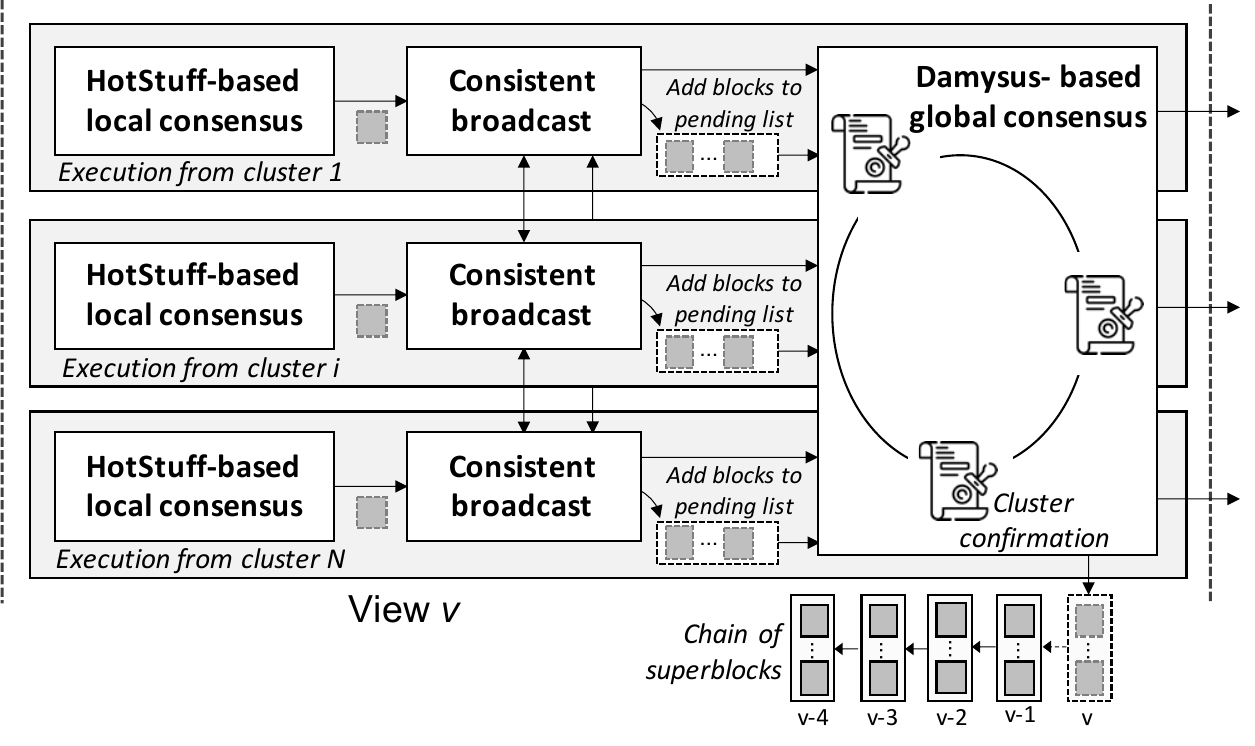}
  \caption{Overview of \protocol and the interplay of its subcomponents}
  \label{fig:protocols_interplay}
\end{figure}

\protocol combines elements from three standard protocols, which we use as building blocks: the consensus protocols Hotstuff~\cite{yin2019hotstuff} and Damysus~\cite{decouchant2022damysus}, and a consistent broadcast protocol~\cite{bracha1987asynchronous,bonomi2021practical}. Fig.~\ref{fig:protocols_interplay} depicts their interplay.
Each replica instantiates three processes:
a \textit{local consensus protocol} based on HotStuff, which locally orders the transactions of each cluster to form a block;
a \textit{consistent broadcast protocol} that disseminates blocks to all clusters and tolerates the presence of $f$ faulty replicas in each cluster,
and
a \textit{global consensus protocol} based on Damysus, which forms a superblock, reaches agreement on it and includes it into the blockchain.

While Fig.~\ref{fig:protocols_interplay} presents the primary steps of \protocol in a sequential manner, it is crucial to note that these steps actually operate in a pipelined and asynchronous fashion.
The local protocol generates blocks of transactions while other blocks are disseminated and global agreement on superblocks with hashes of already disseminated blocks is reached.

An important characteristic of our protocol is that it can handle scenarios where more than $F$ of the replicas in the global group are faulty (Byzantine or originating from crashed clusters).
In fact, the entire global group can become malicious.
To address this challenge, our protocol restricts the global group's role to simply relaying decisions made by the clusters.
To achieve this, we employ a mechanism called \emph{cluster confirmation}, which serves to gather and validate the information that replicas in the global group must relay.
Through this confirmation mechanism, equivocation attempts are immediately detected and will trigger a global view change. This effectively transforms clusters into trusted components that the global consensus protocol considers as entities that may only fail by crashing.
Upon detecting malicious behavior or on timeout, our rotation mechanism replaces the replicas in the global group until eventually a healthy configuration with less than $F$ faulty replicas is found.
Cluster confirmation ensures that \protocol is safe and its rotation scheme eventually and repeatedly returns to healthy configurations, which ensures liveness after global stabilization.

Clusters replay disseminated blocks in case they were not included in the latest superblock. In addition, clients re-transmit transactions in case their local cluster crashes. In the latter case, we ensure re-transmitted transactions conflict with their original transaction to ensure only one becomes durable and to avoid executing the same transaction twice.

\section{Protocol Details}
\label{sec:details}

We now describe \protocol's sub-protocols in greater detail: block formation through local consensus, block dissemination through consistent broadcast and global agreement on a superblock through Damysus-based global consensus.

\subsection{HotStuff-based local consensus}
\label{sec:local_ordering}

Exploiting the physical proximity of clients to their preferred cluster and of replicas within a cluster is key to the efficiency of hierarchical BFT-protocols.
We leverage physical proximity in two aspects: to agree on request blocks in local clusters and thereby allow for early conflict resolution of those transactions that only affect other local clients, and
by selecting a protocol --- HotStuff --- that has been optimized for fast intra-cluster conditions.


HotStuff is a rotating-leader-based homogeneous consensus protocol with linear communication complexity and capable of tolerating up to $f$ Byzantine replicas out of $n = 3f+1$.
In HotStuff, replicas agree on a single block in each view, which we then pass to Damysus for inclusion into the superblock.
Our choice of HotStuff over PBFT is driven by its higher throughput, in particular in data-center environments, as well as by its leader rotation scheme, which avoids the costs and complexities typically associated with view change, and its lower message complexity.


\begin{figure}[t]
  \centering
  \includegraphics[width=.95\columnwidth]{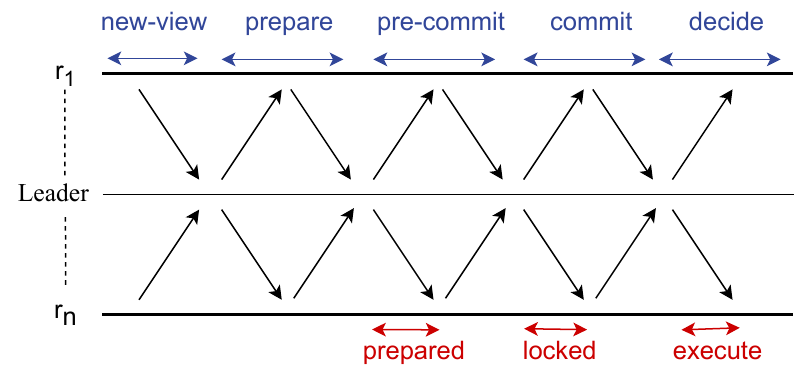}
  \caption{Communication phases in HotStuff}
  \label{fig:phs}
\end{figure}


Fig.~\ref{fig:phs} shows the protocol steps of HotStuff. After entering a \emph{new view} $v$, the rotated-in leader (here replica $R_2$) proposes a block (in \emph{prepare}) and reaches consensus (in \emph{pre-commit} and \emph{commit}). Normally the leader would then execute the \emph{decide} phase so that all replicas can execute the transaction. In \protocol, the actual execution is deferred until 
global consensus ensures durability.
Replicas can leverage this local pre-ordering to asynchronously (to local ordering) resolve local conflicts and disseminate blocks for later inclusion in the superblock.
%
This separation of local consensus, dissemination and global consensus enables \protocol to
proceed through these operations independently, with a positive impact on system responsiveness and scalability.

We optimize \protocol's performance by distributing the roles of local leader and cluster representative to different replicas to avoid that a single replica has to handle coordinating consensus within a cluster and managing block execution at global level.

\subsection{Inter-cluster consistent broadcast}

After executing an instance of the local consensus protocol to lock a block, each correct cluster shares its local block with the other clusters.
Unlike other protocols, block dissemination is asynchronous, which separates dissemination from consensus and transaction execution.
To do so, we utilize a simple broadcast protocol to consistently disseminate local blocks between clusters and their respective local replicas.
For this purpose, each cluster $C_i$ incorporates a rotating disseminator $D_{C_i}$ that initiates the broadcast of a block of transactions previously locked through the HotSuff-based consensus protocol.
The approach we employ is low-cost and effective if the disseminator is error-free.
As a result, we refrain from employing view change techniques notorious for being complicated and heavy.
The disseminator sends the block to $f+1$ replicas in each cluster, ensuring that at least one correct replica per cluster receives the block.
Once a correct replica receives the block, it relays it its peers in the local cluster.
Receiving such a block, correct replicas verify that it was properly locked in the local consensus protocol of the sending cluster before storing it in its local memory, ready to be included in the chain of superblocks.

In the event of a faulty disseminator, the rotating mechanism ensures that another healthy replica in the cluster eventually takes over the role of disseminator.
Additionally, we distribute the role of disseminator within each cluster in a way that avoids potential performance bottlenecks occurring from a single replica handling the multiple responsibilities of coordinating the local consensus, broadcasting blocks, and managing the consensus protocol at the global level.

\subsection{Damysus-based global consensus}

Our system integrates a modified version of the Damysus\cite{decouchant2022damysus} consensus protocol aligning with wide-area network characteristics for robust blockchain replication.
The inter-cluster global consensus protocol adopts a three-phase communication structure where local replicas communicate with cluster representatives, which is illustrated in Fig.~\ref{fig:galaxy}.

The global consensus layer comprises a shared overlay network involving representatives from $N$ clusters, which participate in global consensus even when $F$ clusters are faulty. A clear distinction is made between local leaders for local consensus and representatives for global consensus, balancing replica load and minimizing faulty replica impact.

This consensus protocol is tailored for slow connectivity and high latency, enabling representatives to reach consensus on references (hashes) rather than transmitting large amounts of transaction data or whole blocks. Coupled with asynchronous block dissemination, this design optimizes consensus by ordering fixed-size references, ensuring independence of overall and consensus throughput, while totally ordering blocks.

\begin{figure}[t]
  \centering
  \includegraphics[width=\linewidth]{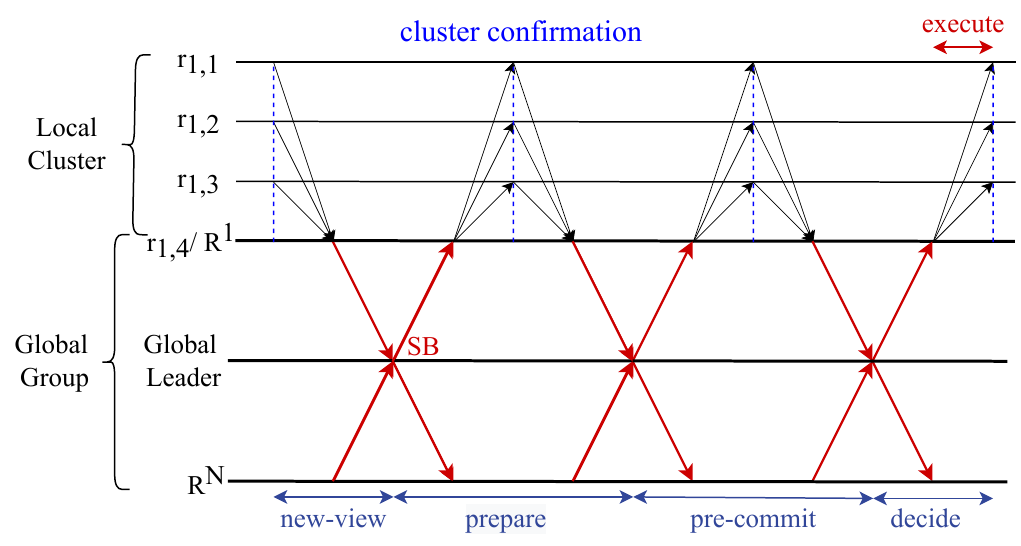}
  \caption{Phases of the Damysus-based global consensus protocol}
  \label{fig:galaxy}
\end{figure}

Unlike some hierarchical schemes~\cite{qushtom2022high}, our global layer eliminates competition for leadership in consensus rounds through a leader rotation scheme. Global leaders generate 'superblock' proposals containing references to local blocks, resulting in a lightweight protocol. Instead of carrying full block data, PREPARE messages from the global leader contain metadata including cluster-ID, replica-ID, ViewNumber, IDs of blocks, and the hash of the preceding superblock.

\textbf{Clusters confirmations.} A cluster confirmation, denoted as $\phi$ and produced by clusters, can be expressed as $\langle h, v, h', v', \text{ph}, \vec{\sigma}_{\text{c}}^{n} \rangle$. This includes the hash values of superblocks represented by $h$ and $h'$, the view numbers depicted by $v$ and $v'$, a phase represented by $ph$ can take on values from the set $\{\text{nv\_p}, \text{prep\_p}, \text{pcom\_p}\}$ (correspond to the new-view, prepare, and pre-commit phases), and a collection of signatures on the data $(h, v, h', v', \text{ph})$, indicated by $\vec{\sigma}_{\text{c}}^{n}$ = [$\sigma_{\text{c1}}$,...,$\sigma_{\text{cn}}$].

A cluster confirmation is referred to as an n-cluster confirmation if it encompasses $N$ signatures from separate clusters.
We characterize \textcolor{purple}{\hyperref[alg:cc]{C-combine}} as a function dedicated to the generation of quorum certificates from the votes of various clusters, also known as confirmations. On the other hand, \textcolor{purple}{\hyperref[alg:cc]{C-match}} validates whether a sufficient number of messages have been received from different clusters during a specific phase.

\textbf{Invoked functions.} \textcolor{purple}{\hyperref[alg:clprepare]{Cl-Prepare}}: This function generates a cluster confirmation as an approval or preparation vote based on the hash value of the proposed superblock from the global leader and an extension to the latter. The extension, proposed by the leader cluster, comes with a cluster signature certifying it as the highest superblock. This function only operates when the extension is generated by the current leader cluster.
\textcolor{purple}{\hyperref[alg:clpcom]{Cl-Pcom}}: Accepts confirmations from $F+1$ clusters, checks that a quorum of clusters prepared the superblock in the current view, and produces a cluster confirmation signifying an approval or pre-commit vote. \label{alg:cc}
\textcolor{purple}{\hyperref[alg:createclustersign]{CreateClusterSign}}: Consists of three steps: First, the cluster leader broadcasts the message to local replicas. Second, local replicas compute a partial signature on the message. Finally, the leader combines these into a cluster certificate, proving that $2fi+1$ replicas have given their approval.
\textcolor{purple}{\hyperref[alg:extlist]{ExtList}}: Selects the new view message for the superblock prepared at the highest view among $F+1$ clusters. It then certifies it as the highest by creating a cluster signature on it.
\textcolor{purple}{\hyperref[alg:clusterstart]{ClusterStart}}: After successfully verifying the signatures of cluster confirmations from remote clusters, it sets the $Ext$ variable. The confirmation is then converted into $Ext$ as a qualified potential extension for the highest superblock.
\textcolor{purple}{\hyperref[alg:clusteriterate]{ClusterIterate}}: Checks the remaining cluster confirmations for a potential higher superblock and updates $Ext$ accordingly.
\textcolor{purple}{\hyperref[alg:clusterfinalize]{ClusterFinalize}}: Takes a potential extension, verifies its signatures, and then creates leader cluster signatures on it.

This lightweight global consensus protocol, encapsulated in Alg.~\ref{alg:pseudocode}, comprises Prepare, Precommit, and Decide phases, detailed in subsequent sections.

\begin{algorithm}
  \caption{\protocol pseudocode} \label{alg:pseudocode}
  \begin{algorithmic}[1]
    \newcommand{\clustsig}{\sigma_{\text{c}}}
    \footnotesize

    \State {\color{blue}{// Protocol Initialization }}
    \State \hspace*{1em} $view \gets 0$ {\color{blue}{// Current view}}
    \State \hspace*{1em} \textbf{$c-pubs$}  {\color{blue}{// Cluster public keys}}
    \State
    \State {\color{blue}{// Prepare phase}}
    \State \textbf{As a} global leader:
    \State \hspace*{1em} \textbf {wait} for $\vec{\phi}$  such that \textcolor{purple}{C-match} $\langle \vec{\phi}, F+1, \perp, \text{view}, \textcolor{purple}{\text{nv\_p}} \rangle$ from representatives  \label{alg:line7}
    \State \hspace*{1em} $Ext \gets$ \textcolor{purple}{ExtList} ($\phi$)
    \State \hspace*{1em} $SB \gets$ \textcolor{purple}{CreateLeaf} ($Ext.hash$, list of blocks ids)  \label{alg:line9}
    \State \hspace*{1em}  $\phi_{prep} \gets$ \textcolor{purple}{Cl-Prepare} ($H(SB)$, $Ext$) \label{alg:line10}
    \State \hspace*{1em} \textbf {send} $\langle SB, Ext, \phi_{prep}.sign \rangle$ to $f+1$ representatives in each cluster
    \State
    \State \textbf{As a} representative:
    \State \hspace*{1em} \textbf {wait} for $\langle SB, (view,v', h', F+1,\clustsig), \clustsig' \rangle$ from the global leader
    \State \hspace*{1em} $Ext \gets \langle view,v', h', F+1,\clustsig \rangle$
    \State \hspace*{1em} $\phi_{\text{prep\_p}} \gets \langle H(SB), view, h',v', \textcolor{purple}{\text{prep\_p}}, \clustsig' \rangle$
    \State \hspace*{1em} \textbf {Abort} if $\neg$(\textcolor{purple}{VERIFY} ($\phi_{\text{prep\_p}})_{c-pubs}$ $\wedge$ $SB > h'$)  \label{alg:line17}
      \State \hspace*{1em} \textbf {send} $\phi' \gets$ \textcolor{purple}{Cl-Prepare}($H(SB), Ext$) to global leader \label{alg:line18}
      \State
      \State {\color{blue}{// Pre-commit phase}}
      \State \textbf{As a} global leader:
      \State \hspace*{1em} \textbf {wait} for $\vec{\phi}$ such that \textcolor{purple}{C-match} ($\vec{\phi}$, F+1, h,view,\textcolor{purple}{\text{prep\_p}}) from representatives
      \State \hspace*{1em} \textbf {send} $\phi \gets$ \textcolor{purple}{C-combine}($\vec{\phi}$) to $f+1$ representatives in each cluster
      \State
      \State \textbf{As a} representative:
      \State \hspace*{1em} \textbf {wait} for $\langle h, view, h', v', \textcolor{purple}{\text{prep\_p}}, \vec{\clustsig}^{F+1} \rangle$ from the global leader
      \State \hspace*{1em} \textbf {send} $\phi \gets$ \textcolor{purple}{Cl-Pcom }($\langle h,\text{view},h',v',\textcolor{purple}{\text{prep\_p}},\vec{\clustsig}^{F+1} \rangle$) to the global leader
      \State
      \State {\color{blue}{// Decide phase}}
      \State \textbf{As a} global leader:
      \State \hspace*{1em} \textbf {wait} for $\vec{\phi}$ such that \textcolor{purple}{C-match} ($\vec{\phi}$,F+1, h,view,\textcolor{purple}{\text{pcom\_p}}) from representatives
      \State \hspace*{1em} \textbf {send} $\phi \gets$ \textcolor{purple}{C-combine}($\vec{\phi}$) to $f+1$ representatives in each cluster
      \State
      \State \textbf{As a} representative:
      \State \hspace*{1em} \textbf {wait} for $\langle h, view, \perp, \perp, \textcolor{purple}{\text{pcom\_p}}, \vec{\clustsig}^{F+1} \rangle$ from the global leader
      \State \hspace*{1em} \textbf{Multicast} $\langle h, view, \perp, \perp, \textcolor{purple}{\text{pcom\_p}}, \vec{\clustsig}^{F+1} \rangle$ to local replicas
      \State \hspace*{1em} \textbf{Abort} if $\neg$(\textcolor{purple}{VERIFY} $\langle h, view, \perp, \perp, \textcolor{purple}{\text{pcom\_p}}, \vec{\clustsig}^{F+1} \rangle_{c-pubs}$)
      \State \hspace*{1em} \textbf{Execute} $SB$ corresponding to $h$ and reply to clients

      \State
      \State {\color{blue}{// NewView phase}}
      \State \textbf{For all} replicas:
      \State \hspace*{1em} When executed or timeout:
      \State \hspace*{1em} $(v, ph) \gets (view, \textcolor{purple}{\text{prep\_p}})$; view++
      \State \hspace*{1em} \textbf{While} $(v,ph) \neq (view,\textcolor{purple}{\text{nv\_p}})$ \textbf{do}
      \State \hspace*{2em} $\phi \gets$ \textcolor{purple}{ClusterSign}(); $(v,ph) \gets (\phi.vprep,  \phi.phase)$
      \State \hspace*{1em} \textbf{End while}
      \State \hspace*{1em} \textbf {send} $\phi$ to view’s leader

  \end{algorithmic}
\end{algorithm}


\begin{algorithm}
  \caption{Invoked Functions} \label{alg:invokedfunctions}
  \begin{algorithmic}[1]
    \footnotesize
    \newcommand{\clustsig}{\sigma_{\text{c}}}


    \State \textbf{function} \textcolor{purple}{ExtList} ($\vec{\phi}$) \label{alg:extlist}
    \State \hspace*{1em} $\phi_0 \gets$ commitment $\phi$ $\in$ $\vec{\phi}$  with highest $\phi_{V_{\text{vjust}}}$

    \State \hspace*{1em} $Ext \gets$ \textcolor{purple}{ClusterStart} ($\phi_0$)
    \State \hspace*{1em} \textbf {for} $\phi \in \{\phi_0\}$ \textbf {do} $Ext \gets$ \textcolor{purple}{ClusterIterate} (Ext, $\phi$)
    \State \textbf {return} \textcolor{purple}{ClusterFinalize} (Ext)

    \State \hspace*{1em}

    \State \textbf{function} \textcolor{purple}{CreateClusterSign} (h, h', v') \label{alg:createclustersign}
    \State \hspace*{1em} \textbf{as a leader}: request cluster signature on $(h, \text{view}, h', v', \text{phase})$ from local replicas
    \State \hspace*{1em} \textbf{as a replica}: compute $psig$ on $(h, \text{view}, h', v', \text{phase})$ \textbf{where} $psig \gets$  \textcolor{purple}{SIGN} $(h, \text{view}, h', v', \text{phase})$
    \State \hspace*{1em} $\text{phase}++$
    \State \hspace*{1em} \textbf{as a leader:} wait for $(n-f)$ $psig$ from different replicas, then build a $\clustsig$  on $(h, \text{view}, h', v', \text{phase})$ where $\clustsig \gets$ \textcolor{purple}{SIGNS} $(h, \text{view}, h', v', \text{phase})_{2f_i+1}$
    \State \hspace*{2em} \textbf{return } $\phi \gets (h, \text{view}, h', v', \text{phase}, \clustsig)$
    \State \hspace*{1em}

    \State \textbf{function} \textcolor{purple}{ClusterSign} () \label{alg:clustersign}

    \State \hspace*{1em} \textbf {return} $\phi \gets$ \textcolor{purple}{CreateClusterSign} ($\perp$, preph, prepv)

    \State \hspace*{1em}
    \State \textbf{function} \textcolor{purple}{Cl-Prepare} (h, Ext) \label{alg:clprepare}
    \textbf {where} Ext is $\langle v,v',h',F+1,\clustsig \rangle$
    \State \hspace*{1em} If \textcolor{purple}{VERIFY} $(Ext)_{\text{c-pubs}} $ $\wedge$ view $=$v $\wedge$ h $\neq \perp$ \textbf {then}
    \State \hspace*{2em} \textbf {return} $\phi \gets$ \textcolor{purple}{CreateClusterSign} (h,h’,v’)
    \State \hspace*{1em} \textbf {end if}

    \State \hspace*{1em}
    \State \textbf{function} \textcolor{purple}{Cl-Pcom} ($\phi$) \label{alg:clpcom} \textbf {where} $\phi$ is $\langle h,v,h',v',ph, \clustsig^{F+1} \rangle$
    \State \hspace*{1em} If \textcolor{purple}{VERIFY} $(\phi)_{\text{c-pubs}}$ $\wedge$ view $=$v $\wedge$ ph $=$\textcolor{purple}{\text{prep}} \textbf {then}
    \State \hspace*{2em} $preph \gets$ h ; $preph \gets$ v ;
    \State \hspace*{2em} \textbf {return} $\phi' \gets$ \textcolor{purple}{CreateClusterSign} (h,$\perp$,$\perp$)
    \State \hspace*{1em} \textbf {end if}

    \State \hspace*{1em}

    \State \textbf{function} \textcolor{purple}{ClusterStart} ($\phi$) \label{alg:clusterstart} \textbf{where} $\phi$ \textbf{is} $\langle \perp, v, h', v',\textcolor{purple}{\text{nv\_p}}, \clustsig \rangle$
    \State \hspace*{1em} \textbf{if} \textcolor{purple}{VERIFY} $(\phi)_{\text{c-pubs}}$ \textbf{then}
    \State \hspace*{1em} $\clustsig' \gets$ \textcolor{purple}{SIGN} $\left(v, v', h',[\clustsig.\textcolor{purple}{\text{clid}}]\right)_{2f+1}$
    \State \hspace*{1em} \textbf{return} $Ext \gets (v, v', h', [\sigma.\textcolor{purple}{\text{clid}}], \clustsig')$
    \State \hspace*{1em} \textbf{end if}


    \newcommand{\clustsigtwo}{\sigma_{\text{c,2}}}

    \State \hspace*{1em}
    \State \textbf{function} \textcolor{purple}{ClusterIterate} (Ext, $\phi$) \label{alg:clusteriterate} \textbf{where} Ext \textbf{is} $\langle v_1, v_1', h_1, \vec{i}, \clustsig \rangle$ \textbf {and} $\phi$ \textbf {is} $\langle \perp, v_2, h_2, \text{{\color{purple}\text{nv\_p}}}, \clustsigtwo \rangle$
    \State \hspace*{1em} \textbf{if} $\left(\begin{array}{c} v_1=v_2 \wedge v_1' \geq v_2' \wedge \clustsigtwo.\text{clid} \notin \vec{i} \land  \\ \textcolor{purple}{VERIFY}(Ext)_{\text{c-pubs}} \land \textcolor{purple}{VERIFY}(\phi)_{\text{c-pubs}} \end{array}\right)$ \textbf {then}
    \State \hspace*{1em} $\clustsig' \gets$ \textcolor{purple}{SIGN} $\left(v_1, v_1', h_1, \vec{i}@[\clustsig.\textcolor{purple}{\text{clid}}]\right)_{2fi+1}$
    \State \hspace*{1em} \textbf{return} $Ext' \gets (v_1, v_1', h_1, \vec{i}@[\sigma.\textcolor{purple}{\text{clid}}], \clustsig')$
    \State \hspace*{1em} \textbf{end if}

    \State \hspace*{1em}
    \State \textbf{function} \textcolor{purple}{ClusterFinalize} (ext) \label{alg:clusterfinalize} \textbf{where} ext \textbf{is} $\langle v, v', h, \vec{i}, \clustsig \rangle$
    \State \hspace*{1em} \textbf{If} \textcolor{purple}{VERIFY} $(Ext)_{\text{c-pubs}}$ \textbf {then}
    \State \hspace*{2em} \textbf{return} $Ext \gets (v, v', h, |\vec{i}|, \clustsig)$ \textbf{where} $\clustsig \gets$ \textcolor{purple}{SIGN} $\left(v,v', h, |\vec{i}|\right)_{2f_i+1}$
    \State \hspace*{1em} \textbf{end if}

  \end{algorithmic}
\end{algorithm}



\textbf{Prepare.} 
In this phase, the global leader collects $F+1$ new-view messages from remote clusters and their representatives (Alg.~\ref{alg:pseudocode}, ln.~\hyperref[alg:line7]{\textcolor{purple}{7}}). The leader selects the new-view message corresponding to the highest-viewed superblock from all $F+1$ received messages, validating its superiority via $2f+1$ local replica confirmation.
Subsequently, the leader extends this highest superblock with a new one, referred to as $SB$ (Alg.~\ref{alg:pseudocode} ln.~\hyperref[alg:line9]{\textcolor{purple}{9}}). This superblock $SB$ is prepared using \textcolor{purple}{\hyperref[alg:clprepare]{Cl-Prepare}} (Alg.~\ref{alg:pseudocode} ln.~\hyperref[alg:line10]{\textcolor{purple}{10}}), ensuring the transition to the next phase while generating a signature for the newly proposed superblock, also known as "cluster prepare confirmation".
The global leader then disseminates the new proposal $SB$, the extension $Ext$, the cluster confirmation formed by the leader cluster to all cluster representatives. Other clusters authenticate that $SB$ extends the superblock in $Ext$ (Alg.~\ref{alg:pseudocode} ln.~\hyperref[alg:line17]{\textcolor{purple}{17}}) and prepare the superblock proposed by the global leader through validation from a quorum of inner-cluster replicas (Alg.~\ref{alg:pseudocode} ln.~\hyperref[alg:line18]{\textcolor{purple}{18}}).

\textbf{Pre-commit.}
In this
phase, the global leader assembles $F+1$ prepare cluster-confirmations from distant clusters and dispatches a prepare ($F+1$)-cluster confirmation to every cluster representative. The clusters invoke \textcolor{purple}{\hyperref[alg:clpcom]{Cl-Pcom}} to store $SB$ and secure a cluster validation that attests the new superblock $SB$ is preserved in $2f+1$ replicas. This process yields a cluster signature, formulated via $2f+1$ local partial signatures from the same phase, ensuring that a minimum of $F+1$ clusters will keep and transmit any executed superblocks in their new-view messages.
Due to the cluster confirmation functioning as Damysus's accumulator, clusters are not required to lock blocks - a mechanism employed in HotStuff during the commit phase. This locking is unnecessary in our case, as the virtual accumulator, based on reliable cluster confirmation, guarantees that the global leader only proposes superblocks that expand upon the highest prepared superblock.

\textbf{Decide.}
In the decide phase, the global leader collects $F+1$ pre-commit clusters-confirmations from remote representatives. Upon receiving these confirmations, the global leader issues a pre-commit ($F+1$)-cluster-confirmation to all cluster representatives. These representatives, in turn, broadcast the cluster-confirmation to their respective local replicas. This action initiates the execution of the superblock across all clusters, effectively achieving consensus on the ordering of the transactions contained within the superblock.

\textbf{New-view.}
Each cluster utilizes timers, commencing at the inception of each view, to transition to the subsequent global view when the current one experiences delays, potentially due to a faulty representative. The new-view phase comes into play either post completion of the current view - implying successful execution of the decide phase - or upon expiration of the timer initiated at the view's start. This mechanism is activated under certain conditions, such as a representative's inability to gather sufficient votes to inaugurate a new view or local replicas' disagreement on the representative's proposal, causing a timeout and triggering the start of a new view. This phase sees clusters progress their views and provide their votes to the global leader of the new view.


\section{Compositional Safety and Liveness}
\label{sec:proofs}

This section establishes the safety and liveness of \protocol by proving a more general compositionality result about hierarchical consensus with cluster confirmation and global rotation.


We assume the standard partial synchrony model and prove liveness after GST (see Section~\ref{sec:threat_model}).
More precisely, we will show that if the subprotocols executed in the clusters are live and safe, then the composition using cluster confirmation and global rotation repeatedly returns the overall system into a state where these properties extend to the global case.

We identify points of progress in the subprotocols (e.g., the dissemination of a message, reaching consensus about a client request, or appending a block of transactions) and consider subprotocols that advance to their respective next progress point, provided they remain for a given amount of time $t$ in a configuration where such progress is possible.
Our proof then establishes that (1) Byzantine configuration cannot jeopardize the state of subprotocols as a whole (i.e., they can only jeopardize the Byzantine replica's state), and (2) global rotation repeatedly returns subprotocols into sufficiently healthy configurations (with at most $f$ Byzantine replicas or up to $F$ representatives that may itself either be Byzantine or that represent crashed clusters) and remains there for at least time $t$. Then, because global rotation repeatedly returns to such healthy state and configuration pairs where progress is possible, the local properties of the protocol extend to the system as a whole as far as safety and liveness are concerned.

\begin{definition}[Local safety/liveness]
  We say clusters are \emph{locally live/safe} if the subprotocols they execute exhibit these properties.
\end{definition}

\begin{definition}[$t$-live]
  A configuration of the protocol is \emph{t-live} if given a correct overall state it is capable of advancing to the next progress point and correct overall state, provided the protocol remains in this configuration for at least time $t$.
\end{definition}

\begin{definition}[($t$-)rotation safe]
  We say a global group protocol is \emph{rotation safe} if its rotation scheme eventually and repeatedly returns to a configuration where at most $F$ representatives are faulty. It is \emph{t-rotation safe} if it remains in such a configuration for at least time $t$.
\end{definition}

Notice that rotation safety implies that the rotation cannot only be triggered by representatives, as they might be all faulty and not trigger rotation.
State transitions of a subprotocol consist of delivery of a message, its analysis and possibly the sending of further messages to trigger further state transitions.

\begin{definition}[Cluster-confirmed state transition]
  State transitions are executed under \emph{cluster confirmation} if the delivered message is accepted only if $n-f$ replicas of the represented cluster approve this message (e.g., by signing it).
\end{definition}

To obtain cluster confirmation, representatives have to first consult their clusters' replicas, before they can send messages that trigger the execution of state transitions under this confirmation.
We require protocols to advance the state they plan to change as part of obtaining such a confirmation. In particular, we require cluster confirmation to be sequenced.

\begin{lemma}[Cluster confirmation retains correct states]
  \label{lem:cluster_confirm}
  Exclusively executing state transitions under cluster confirmation retains correct state in healthy replicas.
\end{lemma}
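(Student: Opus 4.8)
The plan is to argue by induction over the sequence of cluster-confirmed state transitions that every healthy replica remains in a correct state, reducing the claim to two properties that I would establish separately: \emph{authenticity} (every accepted transition was endorsed by at least one honest replica that validated it against its own correct state) and \emph{non-equivocation} (a possibly Byzantine representative cannot obtain confirmations for two conflicting messages at the same sequenced step). Both properties rest on the Byzantine quorum-intersection bound for a cluster $C_i$ of $n_i = 3f_i+1$ replicas: since by the cluster-confirmation definition a confirmation requires $n_i - f_i = 2f_i+1$ signatures, any two confirming quorums meet in at least $2(2f_i+1)-(3f_i+1) = f_i+1 \ge 1$ replicas, hence in at least one honest replica. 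The base case is the assumed correct initial state; the inductive step is to show that the only transitions a healthy replica can be made to execute from a correct state preserve correctness.

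For authenticity, I would first invoke the cryptographic assumption that signatures are unforgeable, so a valid cluster confirmation cannot be fabricated by the representative or any Byzantine coalition; it genuinely carries $2f_i+1$ signatures from distinct replicas of the represented cluster, of which at least $f_i+1$ are honest. By the requirement that protocols advance the state they plan to change as part of obtaining such a confirmation, an honest replica signs a message only after validating it against its current state and performing the corresponding local transition, and by local safety this transition preserves correctness. Thus any confirmed message reflects a transition that is legal from the honest replicas' correct state, so accepting it cannot, on authenticity grounds, drive a healthy recipient into an incorrect state.

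For non-equivocation, I would use that cluster confirmation is \emph{sequenced}: each honest replica advances its own state before signing, so within one logical step it signs at most one value. Suppose, for contradiction, the representative obtained confirmations on two conflicting messages $m \ne m'$ for the same step. Each rests on a $(2f_i+1)$-quorum; by the intersection bound above these quorums share an honest replica, which would then have signed both $m$ and $m'$ for the same step, contradicting single-value signing. Hence at most one message is confirmable per step, and the inductive step closes: from a correct state the only executable transitions are authentic and non-equivocating, so healthy replicas retain correct state, while Byzantine replicas may corrupt only their own state, which the statement explicitly excludes. The main obstacle I anticipate is pinning down ``sequenced'' precisely enough that the single-value-per-step property is airtight --- in particular ruling out that a slow honest replica signs a stale step after already advancing, or that an adversary replays or reorders confirmation requests --- which is exactly where the ``advance the state before confirming'' requirement must be carefully invoked.
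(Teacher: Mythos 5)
Your proof is correct, but it is organized quite differently from the paper's. The paper's argument is a terse two-case analysis from the perspective of a single healthy replica asked to confirm a transition: if the state $s'$ to be modified equals the replica's own state $s$, it checks the message directly and local safety guarantees the successor state; if $s \neq s'$, it accepts $s'$ only when backed by a prior $(n-f)$-confirmation, and the whole argument closes with the single counting observation $n-f>f$, i.e.\ a confirmation quorum must contain at least one healthy validator. Your version makes explicit two things the paper leaves implicit: the induction over the chain of confirmed transitions (which the paper folds into the $s\neq s'$ case), and non-equivocation via quorum intersection ($2(2f_i+1)-(3f_i+1)=f_i+1$ honest replicas in any two confirming quorums), which the paper does not argue in this lemma at all --- it relies only on the weaker fact that each individual quorum contains an honest replica, and defers equivocation concerns to the surrounding discussion of sequencing and the compositional safety theorem. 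What your route buys is a self-contained, airtight treatment of the "same step, two conflicting messages" attack; what the paper's route buys is brevity, at the cost of leaning on the informal requirement that confirmations be sequenced. Your closing worry about making "sequenced" precise is well placed, but since the paper itself imposes sequencing as an explicit assumption just before the lemma, you are entitled to use it, and your proof stands.
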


\begin{proof}
  Let $s$ be the state of a healthy replica, to confirm a state-transition triggering message, representatives must provide the state to modify $s'$ and the message $m$ to trigger the transition from $s'$ to $s''$. \emph{Case} $s = s'$: In this case, the replica is able to confirm immediately whether $m$ is appropriate, by checking the conditions of the protocol. Local safety then ensures that $s''$ is safe as well. \emph{Case} $s \neq s'$: In this case, the replica accepts $s'$ only if $n-f$ replicas confirmed the transition that lead to $s'$. If so, it can approve $m$ as in the previous case. Otherwise, it refuses approval. Since $n-f > f$ approvals are required to confirm a state transition, no faulty replica can jeopardize the state held by a healthy replica.
\end{proof}

\begin{theorem}[Compositional liveness]
  If a protocol is $t$-live if at most $F$ out of $N$ of its group members are faulty, it remains $t$-live if $t$-rotation safety is guaranteed.
\end{theorem}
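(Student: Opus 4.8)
The plan is to lift the configuration-level $t$-liveness hypothesis to system-level liveness by interleaving it with $t$-rotation safety, while leaning on Lemma~\ref{lem:cluster_confirm} to carry a correct-state invariant through the intervals in which the global group is unhealthy. The definition of $t$-liveness has two halves: it guarantees advancement to the next progress point only when the protocol is given a correct overall state \emph{and} the configuration persists for at least time $t$. My argument secures each half separately and then composes them.

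First I would establish the invariant that, at every point in time and regardless of how many of the $N$ representatives are faulty, the state held by healthy replicas is correct. This follows directly from Lemma~\ref{lem:cluster_confirm}: since every state transition is executed under cluster confirmation and $n-f > f$ approvals are required, no amount of Byzantine activity in the global group --- even with all $N$ members faulty --- can corrupt the state of a healthy replica. This invariant is precisely what makes it safe to ``survive'' the bad intervals during which the protocol may make no progress at all. Next I would invoke $t$-rotation safety to extract, after GST, an infinite sequence of good intervals $[\tau_1,\tau_1+t],[\tau_2,\tau_2+t],\ldots$ during each of which at most $F$ representatives are faulty and the configuration is held fixed for at least time $t$.

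The argument then proceeds by induction over these good intervals. At the start of each interval the state is correct by the invariant above; the configuration has at most $F$ faults and is therefore $t$-live by hypothesis; and it persists for the required duration $t$ by rotation safety. The definition of $t$-liveness thus guarantees advancement to the next progress point and to a fresh correct overall state within the interval. Lemma~\ref{lem:cluster_confirm} then preserves correctness across the subsequent (possibly wholly Byzantine) interval until $\tau_{k+1}$, re-establishing the induction hypothesis. Since rotation safety supplies infinitely many such intervals and the protocol advances by at least one progress point in each, the composed protocol makes unbounded progress and is therefore live.

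The main obstacle I anticipate is exactly the hand-off between bad and good configurations: the $t$-liveness hypothesis is vacuous unless its correct-state precondition holds at the instant a good configuration begins, and a careless argument could let Byzantine representatives poison the state during the preceding bad interval. Discharging this obligation is the role of Lemma~\ref{lem:cluster_confirm}, so most of the care goes into checking that every state transition the protocol executes is genuinely gated by cluster confirmation (as the C-match, Cl-Prepare, and Cl-Pcom checks in the pseudocode enforce) rather than accepted on a representative's unilateral claim. A secondary point, flagged by the remark following the rotation-safe definition, is that rotation must not be triggerable by representatives alone; here I would rely on the cluster-level timers of the new-view phase, so that even a fully Byzantine global group is eventually rotated out --- which is what makes $t$-rotation safety attainable in the first place.
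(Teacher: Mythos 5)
Your proposal is correct and follows essentially the same route as the paper's proof: invoke $t$-rotation safety to obtain recurring windows with at most $F$ faulty representatives held for time $t$, use Lemma~\ref{lem:cluster_confirm} to guarantee the correct-state precondition of $t$-liveness at the start of each window (and across the intervening bad intervals), and conclude liveness from the repetition of progress. Your version merely makes the induction over good intervals and the state hand-off explicit, which the paper's terser proof leaves implicit.
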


\begin{proof}
  After GST, a rotation safe protocol eventually reaches a configuration with at most $F$ faulty replicas and a $t$-rotation safe protocol remains in that configuration for at least a time $t$. Lemma~\ref{lem:cluster_confirm} implies that the state of the at least $N{-}F$ healthy replicas is correct and $t$-liveness ensures progress.
  Rotation safety reaches healthy configuration repeatedly and within a bounded amount of time, which ensures liveness.
\end{proof}

\begin{theorem}[Compositional safety]
  If a protocol is safe provided at most $F$ out of $N$ replicas are faulty, then it is also safe in a $t$-rotation safe setting, provided all state transitions are cluster confirmed.
\end{theorem}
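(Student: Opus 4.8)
The plan is to contrast safety with the preceding liveness argument: whereas liveness only requires that rotation eventually reach a configuration with at most $F$ faulty group members, safety is a trace invariant that must be preserved at \emph{every} point of the execution, including the transient configurations in which more than $F$ --- possibly all $N$ --- global-group replicas are faulty. The key observation is that, unlike the liveness guarantee, Lemma~\ref{lem:cluster_confirm} does not depend on the global configuration at all: its proof rests solely on the intra-cluster threshold $n-f > f$, and the per-cluster fault bound (at most $f$ Byzantine replicas out of $n = 3f+1$) is assumed to hold throughout. Hence healthy replicas retain correct state even while the entire global group misbehaves.

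First I would make precise the safety invariant to be preserved, namely that no two healthy replicas (and the clients they serve) ever accept conflicting committed values or orderings. I would then argue that every committed value observed by a healthy replica is the outcome of a cluster-confirmed state transition, so by Lemma~\ref{lem:cluster_confirm} each such value is consistent with the correct state that replica already holds. In particular, a coalition of faulty representatives --- even of size $N$ --- cannot induce a healthy replica to adopt an inconsistent state, because doing so would require fabricating a cluster confirmation, i.e., forging $n-f$ intra-cluster approvals, of which at least one must come from an honest replica that by local safety refuses to sign a conflicting transition.

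Next I would reduce the global fault behavior to crash/omission: under cluster confirmation a faulty representative can at worst withhold or fail to relay a correctly confirmed message, which cannot produce two conflicting commits but can only stall progress. This lets me treat each cluster as a single trusted, crash-only logical participant at the global level, so that the number of \emph{safety-relevant} faults is confined to the at-most-$F$ crashed clusters. I would then invoke the assumed base safety --- the protocol is safe whenever at most $F$ out of $N$ participants are faulty --- observing that additional omission faults never violate safety, only liveness. Combining the two regimes, safety in configurations with at most $F$ faults follows directly from the hypothesis, while safety during the transient bad configurations follows from Lemma~\ref{lem:cluster_confirm}; together they yield safety over the whole execution, and $t$-rotation safety is needed only to recover progress, not to preserve safety.

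The hard part will be making the Byzantine-to-crash reduction airtight: I must cleanly separate the two independent fault budgets --- the intra-cluster bound of $f$ and the global bound that may transiently reach $N$ --- and show that the global Byzantine coalition can never manufacture the intra-cluster quorum that a cluster confirmation certifies. Establishing that a valid $n-f$ confirmation always embeds at least one honest signer who enforces consistency, so that equivocation is impossible, is the crux on which the whole extension of base safety to the unbounded-global-fault setting rests.
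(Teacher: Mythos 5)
Your proposal is correct and follows essentially the same route as the paper's own proof: both rest on Lemma~1 together with the $n-f > f$ threshold guaranteeing an honest signer in every cluster confirmation, and both frame the confirmed cluster as an unforgeable trusted component so that faulty representatives are reduced to omission-only behavior at the global level. Your version merely spells out more explicitly the final step of invoking the base safety hypothesis under the reduced (crash/omission) fault model and the observation that $t$-rotation safety is needed only for liveness, which the paper leaves implicit.
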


\begin{proof}
  Healthy replicas in the local cluster will approve state transitions only if they have learned about all previous cluster confirmations and only if the transition meets the protocol's safety condition. Since $n-f > f$, representatives only receive confirmation if at least one healthy replica agrees. This is equivalent to a trusted component that is exclusively connected to the representative and that provides a token that cannot be forged, stating that the protocol's safety checks succeeded.

  Since Byzantine replicas cannot jeopardize the state of healthy replicas (Lemma~\ref{lem:cluster_confirm}) and since state transitions are exclusively executed under cluster confirmation, no transition can be made that would jeopardize the state of a healthy replica. Healthy replicas therefore retain the healthy state of the protocol, which ensures safety.
\end{proof}

Notice that Byzantine representatives may still alter their internal state and even reveal results to clients, but these revelations did not receive cluster confirmation and can therefore be detected by the client as possibly faulty.

There is one fundamental difference where cluster confirmation diverges from trusted components. While the latter may retain secrets (such as a symmetric key), this is only possible by means of secret-sharing~\cite{shamir1979share,vassantlal2022cobra}, since Byzantine replicas of a cluster might reveal any secrets they know.

\protocol uses Damysus and Hotstuff as subprotocols. Their safety and liveness follows from the proofs in Decouchant et al.~\cite{decouchant2022damysus} and Yin et al.~\cite{yin2019hotstuff}. The extension to $t$-liveness is easy to see given both protocols progress through a fixed number of steps to commit a block and that transmission and computation times are bounded after GST. The same holds for the broadcast protocol we use. Our protocol rotates representatives either after a block has been added (i.e., after progress is made) or after the individual replicas in the clusters time out, triggering a global viewchange and rotation of representatives. With a timeout large enough to make progress after GST, our protocol fulfills the pre-condition for $t$-rotation safety.

What remains to be seen is that our rotation scheme eventually and repeatedly selects a configuration with at most $F$ faulty representatives.
In a crashed cluster all replicas are faulty, whereas in a non-crashed cluster at most $f$ are (and at least $2f+1$ are healthy). We can therefore pessimistically assume $F$ clusters crashed.
However, since our rotation scheme iterates through all combinations of replicas from the individual clusters it will also pass configurations where it selects healthy replicas as representatives from all $N-F$ non-crashed clusters and repeatedly so. We can therefore conclude that our protocol is safe and live.

\section{Performance Evaluation}
\label{sec:evaluation}

We have evaluated \protocol against hierarchical (GeoBFT~\cite{gupta2020resilientdb}) and non-hierarchical (PBFT~\cite{castro2002practical}, HotStuff~\cite{yin2019hotstuff}) state-of-the-art consensus protocols. Evaluations utilized AWS EC2 replicas of different geographical regions and used a single t2.xlarge instance at each replica. We consider replicas of a region to form a cluster and associate this cluster as the preferred one to clients of that region. We distribute clients and replicas uniformly across regions.
We report average performance over ten repetitions, each executed for 30 views.

In all experiments, we collect $400$ transactions to form a block and use $32$ Byte hashes to link blocks and to refer to a block from a superblock. That is, the size of superblocks is $32B$ times the number of blocks it refers to plus one (to link to the previous superblock).
%
Experiments warm up for 60s.

\subsection{Effects of Large-Scale Geographic Deployment}

We start by investigating the effects of large-scale geographic deployment. For that, we scale the number $F$ of clusters we tolerate to crash and hence the number of clusters $N=2F+1$, while keeping their size fixed
(to $n = 3f+1 = 10$ replicas, i.e., clusters can tolerate up to $f=3$ Byzantine replicas).
Fig.~\ref{fig:throughput-C} and~\ref{fig:latency-C} show the throughput and latency of \protocol relative to the baseline protocols. The x-axis denotes the number of clusters $N$,
which we distribute to
North Virginia, Ohio, North California, Mumbai, Singapore, Sydney, Frankfurt, Ireland, London, Central Canada, and São Paulo.

\begin{figure}
  \begin{center}
    \includegraphics[width=\columnwidth]{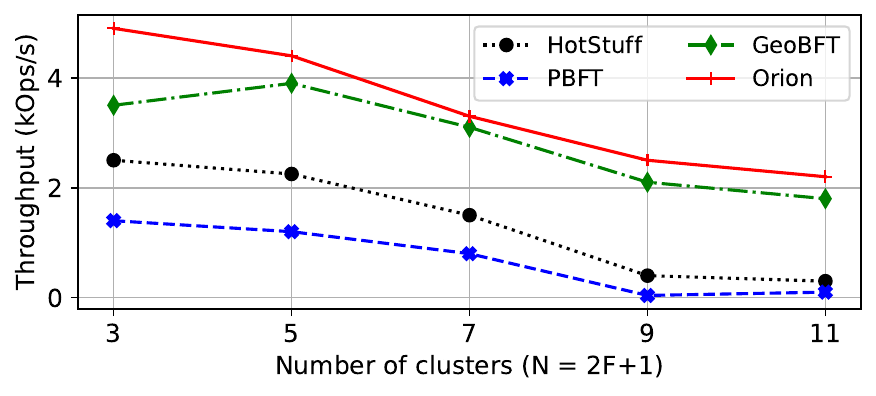}
    \caption{Throughput depending on the number $N$ of clusters (each cluster contains 10 replicas, and tolerates $f=3$ Byzantine replicas).}
    \label{fig:throughput-C}
  \end{center}
\end{figure}

\begin{figure}
  \begin{center}
    \includegraphics[width=\columnwidth]{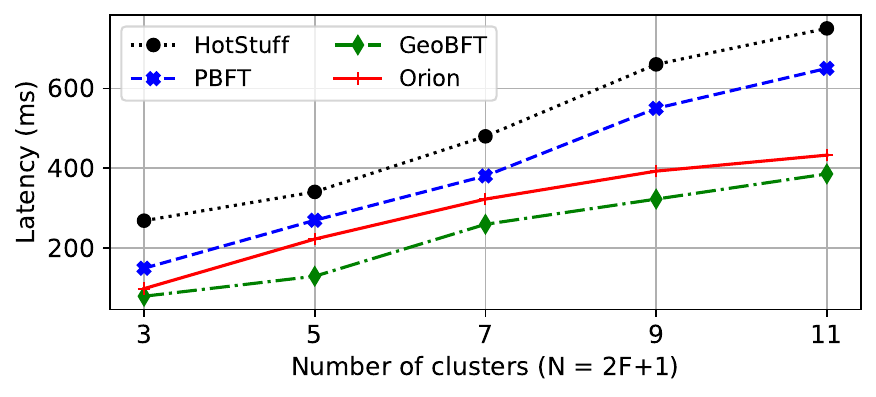}
    \caption{Latency depending on the number $N$ of clusters (each cluster contains 10 replicas, and tolerates $f=3$ Byzantine replicas).}
    \label{fig:latency-C}
  \end{center}
\end{figure}

\protocol outperforms GeoBFT's throughput (by 20\% on average), albeit at the cost of increasing the latency for handling client requests until notification after the request is durable
(by 31\% on average) and while outperforming non-hierarchical protocols in both dimensions. This throughput-for-latency tradeoff is achieved by distributing and reaching global agreement independently from the local agreement. \protocol outperforms GeoBFT in terms of throughput under faults as it relies less on inter-cluster communications.

\subsection{Influence of Increasing Local Clusters Size}

In the second part of our analysis we focus on the performance impact of increasing the size of local clusters. For these experiments, we fix the number of clusters to $N=3$ (i.e., we tolerate $F=1$ cluster crash), distributed over Ohio, Sydney, and London, while increasing the number of Byzantine replicas that each cluster can tolerate from $f=1$ to $f=5$ and cluster sizes from $n=4$ to $n=16$.

Throughput and latency results are shown in Fig.~\ref{fig:throughput-r} and~\ref{fig:latency-r}, respectively. Again, \protocol outperforms GeoBFT in terms of throughput (by $63\%$ on average) and non-hierarchical protocols in terms of both throughput and latency. However, as local clusters grow in size, the latency advantage of GeoBFT drops to $11\%$ on average.
\protocol closes the gap to GeoBFT because GeoBFT requires global messages proportional to $f$, while this number remains constant in our protocol (independent of the cluster sizes, as only one representative communicates globally).
Additionally, every cluster has to disseminate certificates to the associated $f+1$ remote replicas, which grow as well in size as $f$ increases and hence $n$ grows.
Our protocol trades performance improvements in global communication for a slight increase in local communication, since each global step requires reaching $2f+1$ local replicas for cluster confirmation, which explains the increase in latency seen in Fig.~\ref{fig:latency-r}.

\begin{figure}
  \begin{center}
    \includegraphics[width=\columnwidth]{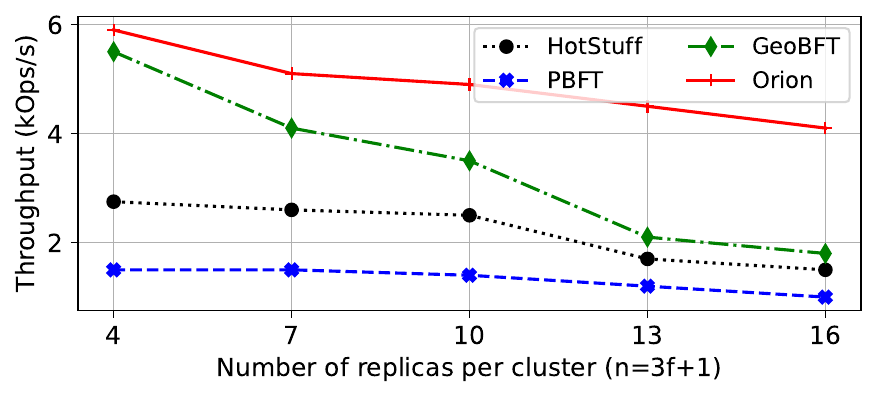}
    \caption{Throughput as a function of the number of replicas in a cluster (with $N = 3$ clusters).}
    \label{fig:throughput-r}
  \end{center}
\end{figure}

\begin{figure}
  \begin{center}
    \includegraphics[width=\columnwidth]{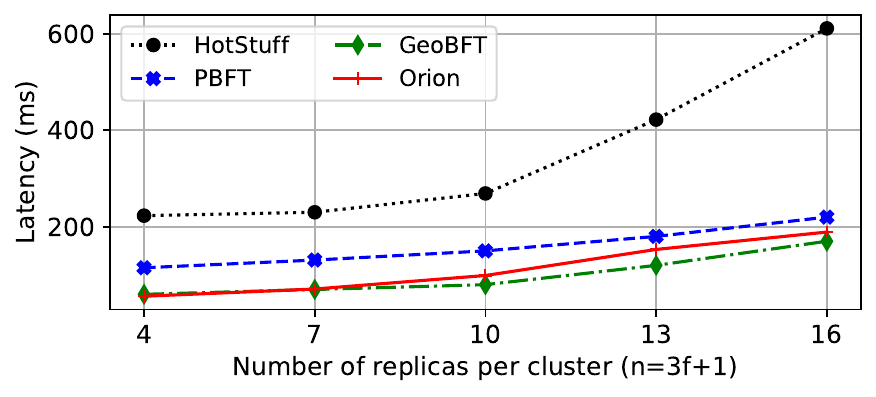}
    \caption{Latency depending on the number of replicas in a cluster (with $N=3$ clusters).}
    \label{fig:latency-r}
  \end{center}
\end{figure}

Overall, \protocol demonstrates that compositionally constructed protocols perform well and can sometimes even outperform specific hierarchical protocols. Compositionally also significantly reduces development costs.

\section{Conclusion}
\label{sec:conclusion}

We introduced \protocol, a novel, compositionally-constructed, hierarchical Byzantine fault-tolerant consensus protocol that addresses
the scalability and performance challenges of blockchain systems. We demonstrated that \protocol achieves superior throughput compared to existing protocols, like PBFT, HotStuff and GeoBFT, albeit by marginally trading off latency. \protocol is based on a novel compositionality result, enabling the use of hybrid protocols at global level without necessitating dedicated trusted components.
Looking ahead, we envisage further refinement of \protocol through expanded evaluations involving diverse clusters, scenarios, and speculative execution to improve latency. Our findings have broad implications for enhancing blockchain and distributed systems' performance, particularly in geo-distributed clustered networks, paving the way for more efficient and reliable distributed ledger technologies.

\clearpage

\bibliographystyle{IEEEtran}
\bibliography{biblio}


\end{document}